\documentclass[sigconf,10pt,nonacm,acmthm=false]{acmart}

\setcopyright{none}
\renewcommand\footnotetextcopyrightpermission[1]{}

\usepackage{amsmath}
\usepackage{amsfonts}
\usepackage{graphicx}

\usepackage{algorithm2e}
\DontPrintSemicolon
\RestyleAlgo{ruled}

\SetCommentSty{mycommfont}

\usepackage{todonotes}
\usepackage{subcaption}

\usepackage{amsthm}

\newtheorem{observation}{Observation}
\newtheorem{lemma}{Lemma}
\newtheorem{theorem}{Theorem}
\newtheorem{corollary}{Corollary}

\begin{document}

\title{Privacy Attacks on Stable Marriage}


\author{Stephan A. Fahrenkrog-Petersen}
\affiliation{%
  \institution{University of Liechtenstein}
  \city{Vaduz}
  \country{Liechtenstein}
}
\email{stephan.fahrenkrog@uni.li}

\author{Aleksander Figiel}
\affiliation{%
  \institution{TU Berlin}
  \city{Berlin}
  \country{Germany}
}
\email{a.figiel@tu-berlin.de}

\author{Darya Melnyk}
\affiliation{%
  \institution{TU Berlin}
  \city{Berlin}
  \country{Germany}
}
\email{melnyk@tu-berlin.de}

\author{Tijana Milentijevi\'c}
\affiliation{%
  \institution{TU Berlin}
  \city{Berlin}
  \country{Germany}
}
\email{tijana.milentijevic@tu-berlin.de}

\author{Stefan Schmid}
\affiliation{%
  \institution{TU Berlin and Weizenbaum Institute}
  \city{Berlin}
  \country{Germany}
}
\email{stefan.schmid@tu-berlin.de}

\begin{abstract}
The stable marriage problem appears in many privacy-sensitive domains, for example in the National Resident Matching Program in the US. In such applications, preserving the privacy of users’ preference lists is essential to prevent strategic manipulation, discourage misreporting, and comply with data protection regulations.

In this work, we investigate privacy attacks on stable marriage algorithms. Assuming that the attacker (e.g., the hospitals) can repeatedly interact with the stable marriage algorithm, we demonstrate how such interactions can reveal private preferences of the non-malicious side (e.g., the residents). We show that the widely applied Gale-Shapley Matching Algorithm, where the proposers' side is malicious, is vulnerable to privacy attacks and all honest agents' preferences can be revealed. 
We further investigate which preference distributions of the honest, non-malicious side are susceptible to privacy attacks and show that the Gale-Shapley Matching Algorithm where the honest side proposes can preserve privacy in non-susceptible preference distributions. We extend our results to the decentralized setting and show that the attacker's side can infer all preference orderings. In an experimental evaluation, we test privacy attacks on synthetic and real-world data and show that real-world data is indeed susceptible to privacy attacks. This work underlines a need for new privacy-preserving stable marriage algorithms.
\end{abstract}

\maketitle

\begin{center}
\begin{minipage}{0.95\columnwidth}
\footnotesize
\copyright~2026 IEEE. Personal use of this material is permitted.
Permission from IEEE must be obtained for all other uses, in any
current or future media, including reprinting/republishing this
material for advertising or promotional purposes, creating new
collective works, for resale or redistribution to servers or lists,
or reuse of any copyrighted component of this work in other works.
\end{minipage}
\end{center}

\begin{center}
\begin{minipage}{0.95\columnwidth}
\footnotesize
This research was supported by the German Research Foundation
(DFG), Priority Programme SPP 2378 (ReNO-2), 2025--2029.
\end{minipage}
\end{center}

\section{Introduction}
 
In many real-world applications, such as college admissions, donor-patient pairing, job markets, or residency assignment, it is required to compute a \emph{stable marriage}, also referred to as the stable matching, that reflects the preferences of the participants. These preferences are typically expressed as ranked lists that contain sensitive information and should be kept private.
A simple everyday example is assigning students to project topics: students rank the projects they would like to work on, while supervisors rank the students they would like to supervise. A matching is stable if there is no student and supervisor who would both rather work with each other than with their assigned partner. 
In order to compute a stable marriage, algorithms like the renowned Gale-Shapley algorithm let participants on one side propose to their top-ranked preference. If rejected, they iteratively propose to the next best alternative in their ranking.

A particularly relevant and sensitive use case is the US National Resident Matching Program (NRMP), where medical graduates are matched to hospital residency programs. In this setting, applicants (graduates) are the more vulnerable side, while hospitals may have incentives to prefer candidates who rank them highly. For example, a clinic may avoid selecting applicants who list it as a second or third choice, fearing those candidates will leave after training. If such programs could infer or access parts of the applicants' preferences, they might manipulate the process to their advantage. This undermines both fairness and stability, and highlights the need for matching protocols that maintain not only correctness, but also the confidentiality of individual preferences.

\paragraph{Our Contribution} We initiate the study of how much information one colluding side of the participants (the hospitals in the example) can infer about the preferences of the other side of the participants (the residents in the example). 
Note that we assume a strong collaboration on the colluding side, with full information exchange.
The challenge lies in choosing preferences for the dishonest (colluding) side in such a way that useful information about the preferences of the honest side can be extracted from each stable matching. 
Given a single stable matching, an attacker may infer some preference information such as one participant ranking another above someone else.
However, in practice, many matching systems repeat the matching process over multiple rounds, allowing the change of one's preferences between the rounds. We show that this poses a great risk to the confidentiality of preference information. 
We analyze both the centralized and the decentralized settings. 
In the centralized case, a trusted third party aggregates all preferences and publicly announces a stable matching. In the decentralized setting, participants send each other matching proposals in a peer-to-peer fashion and thus determine their matching partner.
In this work, we assume that each side has $n$ participants.
We show that:
\begin{itemize}
    \item it is possible to infer the top preference of all honest participants within $n$ matching rounds, independent of the chosen algorithm,
    \item it is possible to learn the full preference ordering of participants if all participants have the identical preferences,
    \item there are algorithms that, under certain input preferences can prevent the attacker from revealing too much information, such as the Gale-Shapley matching algorithm where the honest side proposes,
    \item the Gale-Shapley matching algorithm where the dishonest side proposes, as well as the decentralized Gale-Shapley matching algorithms, are vulnerable to revealing the full preference ordering.
\end{itemize}
We further design multiple strategies for the centralized and decentralized setting that aim to maximize what can be learned about the preferences and evaluate them empirically on synthetic and real-world data.
 

\paragraph{Example}
Consider the problem of pairing medical graduates with residency programs. In the US, the pairing is done using the National Resident Matching Program (NRMP), also known as The Match. The algorithm used in the program is based on the traditional Gale-Shapley algorithm where applicants propose, delivering an applicant-optimal solution. Both applicants and clinics submit their preferences to the R3 system, which performs the pairing centrally. Now hospitals might prefer candidates who rank them highly, fearing that applicants who ranked them lower would leave after training. Therefore, residency programs may have an incentive to learn applicants' preferences and exploit this information to their advantage. As we demonstrate, by using different preference orderings and repeating the matching, the clinics are able to find out applicants' preferences. Once the preferences are learned, the clinics can adjust their own preferences strategically to get a match in their favor. For a toy example on how the learning process for an attacker works see Figure~\ref{fig:learning_example}.

\begin{figure}
    \centering
    \begin{subfigure}{0.2\textwidth}
        \begin{align*}
        \makebox(15,15){$a$:} \colorbox{blue!20}{\makebox(15,15)1} \makebox(15,15){$\succ_a$} \makebox(15,15)2 \\
        \makebox(15,15){$b$:} \colorbox{red!20}{\makebox(15,15)2} \makebox(15,15){$\succ_b$} \makebox(15,15)1
        \end{align*}        
    \end{subfigure}
    \\
    \begin{subfigure}{0.2\textwidth}
        \begin{align*}
        \makebox(15,15){1:} \colorbox{blue!20}{\makebox(15,15)a} \makebox(15,15){$\succ_1$} \makebox(15,15)b \\
        \makebox(15,15){2:} \colorbox{blue!20}{\makebox(15,15)a} \makebox(15,15){$\succ_2$} \makebox(15,15)b
        \end{align*}
    \end{subfigure}
    \begin{subfigure}{0.2\textwidth}
        \begin{align*}
        \makebox(15,15){1:} \colorbox{red!20}{\makebox(15,15)b} \makebox(15,15){$\succ_1$} \makebox(15,15)a \\
        \makebox(15,15){2:} \colorbox{red!20}{\makebox(15,15)b} \makebox(15,15){$\succ_2$} \makebox(15,15)a
        \end{align*}
    \end{subfigure}
    \caption{A small example of how one can attack the preferences of agents $a$ and $b$. The agents 1 and 2 are ranked highest by $a$ and $b$, respectively (see top figure). If both attackers 1 and 2 lie about their preferences and set $a$ as their top preference, as visualized in the bottom left figure, then, in any stable matching, $a$ has to be matched to its first preference. Once we observe the matching for these preferences, and see that $a$ and $1$ are matched, we conclude $1$ was the first preference of $a$. Likewise, the attackers can both set $b$ as their first preference, as visualized in the bottom right figure, and learn that $2$ is the first preference of $b$. Since this example is so small, once we know the first preference, we also know the last. In two matching iterations, the attackers were able to learn all preferences.}
    \label{fig:learning_example}
\end{figure}

\section{Related Work} 
The stable marriage problem was formally introduced by Gale and Shapley in \cite{gale1962college} and is today often referred to as the stable matching problem. Later variants of this problem, such as allowing partial preferences or having ties in the preference list, have been considered in \cite{stable-marriage-book}. 

Adversarial behavior in matching problems in form of strategic manipulation and misreporting of preferences has been studied extensively in the literature \cite{roth1982economics, stable-marriage-book}.
\cite{roth1982economics} showed that the stable marriage is not always truthful, as agents could lie about their preferences in order to get a better matching partner. Nevertheless, \cite{gale1962college} have shown the proposing side cannot gain anything by misreporting their preferences in the Gale-Shapley algorithm. 
A similar setting was considered in \cite{teo2001gale}, where it is required that all agents submit complete (not partial) preference lists. They showed that an optimal manipulation strategy for women, under the man-optimal Gale–Shapley outcome, can be computed in polynomial time. Additionally, cases where agents from the proposing side can work jointly to obtain better partners \cite{huang2006cheating} and construct stable matching pairs even with errors in the input \cite{mai_et_al:LIPIcs.ESA.2018.60} were studied in the past. 
Also robustness measures, with respect to small changes in preferences,  related to stable matchings were examined in \cite{boehmer2024worst, 10.1145/3485000, drummond2013elicitation, aziz2017stable, aziz2020stable}. The aforementioned lying strategies assume the knowledge of preference orderings based on which a better pair for an agent can be obtained.
\cite{constantinescu2025byzantinestablematching} recently introduced the notion of Byzantine stable matching. The authors study the solvability of the stable matching problem in the presence of malicious parties on both sides, examining how different communication models impose varying conditions for solvability. 
In contrast to the previous adversarial studies, our work focuses on inferring preference lists based on the pairs matched by an algorithm rather than manipulating the matching.


Privacy aspects of stable marriage have also been considered in the literature. A stable matching algorithm is considered private if it computes a stable match while revealing no additional information to the adversary, beyond what can be inferred from the final matching itself and the preferences of the participants under the adversary’s control \cite{10.1007/11889663_5}. \cite{10.1007/11889663_5} proposed an algorithm where privacy is guaranteed if the majority of matching authorities are honest. 
Further work regarding private matchings using secret sharing schemes \cite{franklin2007improved} and homomorphic encryption \cite{teruya2015round} have been studied. Practically implemented private stable matchings with thousands of agents include \cite{10.1145/2976749.2978373, teruya2015round, keller2014efficient, blanton2013data, zahur2016revisiting, riazi2017toward}.

Research on multi-agent systems provides an additional perspective on private matching problems, particularly through work on strong collaboration among agents \cite{wooldridge2009introduction, sioutis2006agent, 10.1145/3745238.3745531, lashkari1997collaborative}, where agents can collaborate to reach their own goals. 
In this context, matchmaking in multi-agent systems refers to mechanisms that assign agents to other agents or clusters based on their interests and preferences \cite{foner1997yenta, 10.1145/545056.545129, kuokka1995matchmaking}. Unlike our work, these scenarios are typically benign and assume all agents are honest and behave correctly. Note that in our work, benign agents do not collaborate, but we do assume strong collaboration among dishonest agents.

The privacy attack considered in this paper was first introduced in the context of scheduling problems \cite{fahrenkrog2023privacy}, where the original scheduling instance is solved and the privacy loss resulting from the publication of the final schedule is analyzed. 
To our best knowledge, we are the first to consider privacy attacks in stable marriage where only the final matching is revealed and the protocol itself does not expose intermediate state. 


\section{Privacy Attacks} 

\subsection{Model}
\paragraph{Stable marriage/matching}
We consider a two-sided matching system with $2n$ participating agents, where every agent has to be matched to exactly one agent from the other side. The agents are divided into two groups, $H = \{ h_1, \dots, h_n\}$ representing honest agents, and $D = \{d_1, \dots, d_n\}$ denoting dishonest agents, such that $|H|=|D|=n$.
We assume that the dishonest agents share all information with each other and coordinate on how to set their preferences. For simplicity one may assume a single entity controls the dishonest agents.
We assume that each agent in $a\in H$ (resp. $D$)  maintains a strict, linear preference ordering $\pi_a$ over the agents in $D$ (resp. $H$). In particular, for any pair of agents $b,c \in D$, $a\in H$ either strictly prefers agent $b$ over $c$, denoted $b \succ_a c$, or it strictly prefers agent $c$ over $b$, denoted $c \succ_a b$. We represent the ordering of agent $a$ as a list  $\pi_a = [1^a, 2^a, \dots, n^a]$, where $1^a$ is the agent that $a$ ranks the highest, and $n^a$ is the agent $a$ ranks the lowest.
We use $\Pi^H$ (resp. $\Pi^D$) to denote the matrix containing the preferences of all agents $a \in H$ (resp. $D$), specifically $\Pi^D[d,i] = h$ is the honest agent $h \in H$ that is $i^{\text{th}}$ in the ranking of $d \in D$
The goal is to find a \textit{stable matching} $M$ between $H$ and $D$: a stable matching is a matching that does not contain any \textit{blocking} matching pairs. A pair of agents $(h,d)\in H\times D$ is called blocking, if both $h$ and $d$ prefer to be matched to each other over their current matching partner. In the following we define two models in which we consider the stable matching problem.

\paragraph{Centralized model} In the centralized setting, we assume that there is a trusted central server to which the agents submit their preference rankings. Upon receiving the preferences $\Pi^H$ and $\Pi^D$, a matching algorithm is executed on the server to compute a stable matching $M$. Afterward, the agents receive their matching partners from the server. In this work, we will assume that the applied algorithm to compute a stable matching is public, and thus known to all agents. 

\paragraph{Decentralized model} In the decentralized setting, we assume that the agents can communicate to each other in a peer-to-peer manner. The communication is assumed to be synchronous, i.e. all agents start executing a stable matching algorithms at the same time and proceed in rounds. We assume that all agents know the executed algorithm.  

\paragraph{Adversarial model} In this work, we differentiate between two types of agents. The set $H$ denotes a set of \textit{honest} agents who always share their true preference lists with the central server in the centralized case. In the decentralized case, we assume that the honest agents do not change their preference lists during different algorithm runs, and that they always correctly follow the algorithm. The set $D$ denotes the set of \textit{dishonest} (or \textit{malicious}) agents. These agents are not bound to sharing their true preferences with the central server or with other agents. Additionally, in the decentralized case, malicious agents may not follow the distributed algorithm correctly. 

\paragraph{Centralized privacy attack} We assume that the goal of an attacker (who represents the malicious agents) is to violate the privacy of the honest agents. This means that the goal of the dishonest agents is to (at least partially) reveal the preference lists of the honest agents. In the centralized case, this can be done by repeatedly executing a matching algorithm with the identical preference matrix $\Pi^H$, and adaptively changing dishonest preferences $\Pi^D$. To compare different malicious attacks, we assume that one run of the matching algorithm by the server is atomic. An attack that finds out the preferences of the honest agents with fewer matching computations is thus favorable.
 
\paragraph{Decentralized privacy attack} While the goal of a malicious adversary is the same as in the centralized case, we change the way to estimate the efficiency of an attack. In the decentralized case, we count the number of rounds needed to reveal the preferences of the honest agents. If multiple algorithm executions are needed, we sum the number of rounds used in each execution.

\section{Theoretical Results}

\subsection{Gale-Shapley Matching Algorithms}

In this section, we will describe different variants of the classical Propose$\&$Reject \cite{gale1962college} algorithm that will be analyzed in the following sections.

\paragraph{Propose$\&$Reject - centralized}
The algorithm is executed by a central authority that collects complete preference lists from all agents and outputs a stable matching. In particular, one side (noted as the proposers) iteratively proposes to agents on the other side according to the preference ordering, while each receiver tentatively accepts the most preferred proposal so far and rejects the rest. The process continues until no further proposals are made and the central authority outputs a stable matching.

\paragraph{Propose$\&$Reject - decentralized}
This algorithm is a generalization of the centralized algorithm, and it is executed in synchronous rounds. In the first round, the proposing side proposes to agents on the other side according to the preference ordering. Next, each receiver tentatively accepts the most preferred proposal so far and rejects the rest. The process proceeds in rounds and terminates when no new proposals are made, resulting in a stable matching.


\subsection{Algorithm-Independent Leakage}
In this section, we show privacy attacks that generalize to all matching algorithms. We start by proving that, independent of the choice of the preference matrix $\Pi^H$, there exists an adversarial strategy that reveals the top preference of each honest agent.

    \begin{lemma} \label{lem:first-choice-learnable}
    Consider the centralized stable matching setting. Independent of the stable matching algorithm, it is always possible for the malicious side to learn the top preference of each agent in $H$ within $n$ iterations.   
    \end{lemma}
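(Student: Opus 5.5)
The plan is to generalize the toy example of Figure~\ref{fig:learning_example}: to learn the top choice of an honest agent $h\in H$, make $h$ the unanimous first choice of every dishonest agent. Then show that in \emph{every} stable matching $h$ must be matched to its own first preference. Since the lemma must hold independently of the algorithm, the argument will use only the stability of the output $M$, not how $M$ was computed. This costs one iteration per honest agent, hence $n$ iterations in total.

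Concretely, fix $h\in H$. In the iteration devoted to $h$, every $d\in D$ submits a preference list with $h$ in position $1$; the rest of each list is arbitrary, say a fixed ordering of $H\setminus\{h\}$. The server returns some stable matching $M$ for $(\Pi^H,\Pi^D)$. Let $d^\ast=1^h$ be $h$'s true top choice, and suppose for contradiction that $M(h)=d'\neq d^\ast$. Then $h$ strictly prefers $d^\ast$ to $d'$. Also, $d^\ast$ is matched in $M$ to some $h'\neq h$, because the matching is perfect and $d'$ is $h$'s partner. Since $h$ is $d^\ast$'s top choice, $d^\ast$ strictly prefers $h$ to $h'$. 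So $(h,d^\ast)$ is a blocking pair, contradicting stability. Hence $M(h)=1^h$, and the attackers read off $h$'s top preference directly from the published matching.

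Repeating this for each $h_1,\dots,h_n$ uses exactly $n$ matching computations. Because $\Pi^H$ is fixed across runs, each run reveals $1^{h_i}$ correctly regardless of the attackers' preferences in earlier runs. One can remark that $n-1$ iterations already suffice when the top choices are distinct, but the lemma only claims $n$.

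There is no real obstacle here. The only point needing care is the algorithm-independence: we must not rely on Gale--Shapley dynamics, and the blocking-pair argument above indeed uses stability alone. It is also worth noting explicitly that the model requires the algorithm to output a perfect stable matching with complete lists, so that $d^\ast$ always has some partner $h'\neq h$.
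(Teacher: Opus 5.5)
Your proof is correct and follows the paper's argument: in iteration $i$ every dishonest agent ranks $h_i$ first, and stability alone forces $h_i$ to be matched with $1^{h_i}$, since otherwise $(h_i,1^{h_i})$ would block. You also spell out the step the paper leaves implicit, namely why $1^{h_i}$ prefers $h_i$ to its own partner; your aside that $n-1$ iterations suffice when the top choices are distinct holds only if the attacker knows this in advance, but it plays no role in the lemma.
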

    \begin{proof}
    Assume an adversarial strategy where, in the first iteration, every agent $d \in D$ sets the first agent $h_1$ of $H$ as the top choice for every candidate. This way, any stable matching algorithm has to match $h_1$ with its top choice. If this was not the case, $h_1$ and its top preference would form a blocking pair. Similarly in iteration $i$, the malicious agents place the $i^{\text{th}}$ agent in $H$ as their top alternative. After $n$ rounds the adversary knows the top preferences in of every $h \in H$.
    \end{proof}
Next, we consider the case where all good agents have the identical preferences in  $\Pi^H$.

    \begin{lemma} \label{lem:all-same-good-propose}
    If every row of $\Pi^H$ is the same, i.e., all good agents have the identical preferences, then there exists no stable matching algorithm that can hide $\Pi^H$.
    \end{lemma}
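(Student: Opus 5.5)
Since every honest agent has the same list $\pi=[1,2,\dots,n]$ over $D$, it is enough to recover $\pi$. The adversary can also fix $\Pi^D$ however it wants. The strategy is to choose dishonest preferences for which, regardless of $\Pi^H$ in this setting, the stable matching is unique and reveals $\pi$ directly. I would then apply this to an arbitrary stable matching algorithm: the algorithm has to output a stable matching, and if there is only one, its output is fixed.

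\textbf{Step 1 (choice of $\Pi^D$).} Every dishonest agent submits the same list $h_1 \succ h_2 \succ \dots \succ h_n$. Each side now has a common ranking of the other side: honest agents rank $D$ as $1^\pi,2^\pi,\dots,n^\pi$, and dishonest agents rank $H$ as $h_1,\dots,h_n$.

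\textbf{Step 2 (uniqueness and structure).} I claim the only stable matching is the assortative one, $h_i \leftrightarrow i^\pi$. The proof is by induction on $i$. Suppose $h_1,\dots,h_{i-1}$ are matched to $1^\pi,\dots,(i-1)^\pi$ in every stable matching, and that $h_i$ is not matched to $i^\pi$. Then $i^\pi$ is matched to some $h_j$ with $j>i$, and $h_i$ is matched to some $k^\pi$ with $k>i$. But $h_i$ prefers $i^\pi$ to $k^\pi$, and $i^\pi$ prefers $h_i$ to $h_j$, so $(h_i,i^\pi)$ is a blocking pair. This is a contradiction. The base case $i=1$ is the same argument, and it is essentially the argument of Lemma~\ref{lem:first-choice-learnable}: $h_1$ is everyone's top choice, so it must get its own top choice.

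\textbf{Step 3 (reading off $\Pi^H$).} Any stable matching algorithm must therefore output $M=\{(h_i,i^\pi)\}$. The adversary reads off $i^\pi$ as $h_i$'s partner for every $i$, which gives the whole common row $\pi$ and hence all of $\Pi^H$, after a single iteration. If the adversary does not know in advance that the rows coincide, it has still learned every agent's top choice and more. The lemma, however, only claims that no algorithm hides $\Pi^H$ under the stated hypothesis, so one execution suffices.

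The only step that needs care is the induction in Step 2. There I must make sure the blocking-pair argument uses the strict common orders on both sides, and that matchings are perfect ($|H|=|D|$). I expect no real obstacle beyond writing this cleanly.
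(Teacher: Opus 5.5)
Your argument is correct if the attacker may use the hypothesis that all rows of $\Pi^H$ coincide, which the statement as written allows. It is, however, a genuine shortcut rather than the paper's proof.

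Both arguments make the choice of algorithm irrelevant via uniqueness of the stable matching, and both have all dishonest agents submit one common list. You stop after the first such list. You show by your induction that the output is the assortative matching $h_i \leftrightarrow i^\pi$. You then invoke the hypothesis to turn ``partner of $h_i$'' into ``$i$-th entry of every row''.

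The paper instead runs all $n$ cyclic shifts of that list (the Round-Robin attack). In the shift where $h_i$ sits at position $x$ of the common dishonest list, stability certifies only that $h_i$'s partner lies in $h_i$'s top $x$. This deduction is valid for arbitrary $\Pi^H$. Under the hypothesis, $h_i$ occupies every position once and receives a different partner each time, which the attacker can observe. The chain top-$1$, top-$2$, \dots\ then forces $h_i$'s whole list.

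Your attack thus costs one iteration instead of $n$. The paper's attack determines $\Pi^H$ from the observations alone, without knowing in advance that the rows are identical; the remark closing its proof says this, and the simulations use the lemma this way for Round-Robin. It also always recovers every top choice.

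Your closing remark, however, is false. After your single run, an attacker who does not know the rows coincide has certified only $h_1$'s top choice. With one common dishonest list, the unique stable matching is serial dictatorship in the order $h_1,\dots,h_n$. So for $h_i$ you learn only that its partner lies in its top $i$. For $n=2$, for example, $h_2$ receives the leftover agent whatever its ranking. If ``hiding'' means the attacker cannot determine $\Pi^H$ without that prior, your one-shot attack does not suffice, and you need the $n$ shifts.
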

    \begin{proof}
    By~\cite{stable-marriage-book, clark2006uniqueness}, there exists a unique stable matching if agents of at least one side all have the identical preference list. Since the matching is unique, we choose a stable matching algorithm and prove that this algorithm cannot hide $\Pi^H$. Consider therefore the Propose$\&$Reject algorithm where the attacker proposes. The attacker can learn the whole preference matrix in $n$ iterations using the following strategy: 
    Assume that there are $n$ honest agents with IDs $h_1, h_2,\dots, h_n$ and all of them have the identical preference orderings. 
    Let the $n$ malicious agents set their preferences in a \textit{Round-Robin} manner for the next $n$ iterations, where each agent has the identical preference list. For example, they set $h_1 \succ h_2 \succ h_3 \succ  \dots  \succ h_n$ in the first iteration; $h_2 \succ h_3 \succ  \dots h_n \succ h_1$ in the second iteration; $h_n \succ h_1 \succ  \dots  \succ h_{n-1}$ in the last iteration. We refer to the corresponding attack as the \textit{Round-Robin attack}.
    
    With this strategy, in each iteration, every honest agent will be paired with a different malicious agent that it has not been paired with in an earlier iteration. 
    This happens, since all honest agents have the identical preference list and, in each execution of a matching algorithm, they will be proposed to by all malicious agents. For example, in iteration $3$, malicious agents have the preference list  $h_3 \succ h_4 \succ \dots  \succ h_n \succ h_1 \succ h_2$. In this case, the agent $h_3$ will be proposed to first by all malicious agents and will pair with its highest preference. Similarly, agent $h_4$ would pair with its second hightest preference and so on. Note that the matching is different in every execution of a matching algorithm, because every honest agent has the identical preference list.

Consider iteration $k$ of the presented attack strategy. Assume agents $h_i$ and $d_j$ are matched. We know that $d_j$ ranks $h_i$ at position $x = (i+k-1) \mod n + 1$ in this iteration. From the stability of the matching we can conclude that $d_j$ is in $\{1^{h_i}, \ldots, x^{h_i}\}$.
    Now after $n$ iterations of the attack strategy, for every honest agent $h_i$, we have obtained the following information: for every $y \in [n]$, $h_i$ was matched with some $d_{j_y}$ in some iteration and it was determined that $d_{j_y}$ is in $\{1^{h_i}, \ldots, y^{h_i}\}$. However, since $h_i$ was matched to a different agent in each iteration we deduce that $d_{j_y}$ in fact corresponds to $y^{h_i}$ of $h_i$. Consequently, this attack strategy learns all preferences. Note that the attack strategy does not assume beforehand that all honest agents' preferences are identical, however, if they are, then it succeeds in learning them.
    \end{proof}


\begin{observation}
    The Round-Robin attack also learns the top preference of each agent.
\end{observation}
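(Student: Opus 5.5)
The plan is to reduce the observation to the argument of Lemma~\ref{lem:first-choice-learnable}. The Round-Robin attack is itself an instance of the strategy used there, since in iteration $i$ every malicious agent places the same honest agent first. That lemma's strategy only needs, in iteration $i$, one honest agent that every malicious agent ranks first; the rest of the malicious lists can be arbitrary. So I will check that the Round-Robin schedule has this property for every honest agent over the $n$ iterations, and then reuse the blocking-pair argument verbatim. This argument does not depend on $\Pi^H$ having identical rows or on which stable matching algorithm is used.

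First, in iteration $k$ all malicious agents submit the same cyclic shift of $h_1 \succ \dots \succ h_n$. Its first element is $h_k$, so every $d\in D$ ranks $h_k$ first. Over iterations $k=1,\dots,n$ each honest agent $h_k$ appears exactly once as the common top choice.

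Second, fix iteration $k$ and let $d^\ast = 1^{h_k}$ be the true top preference of $h_k$. Suppose the output matching $M$ pairs $h_k$ with some $d\neq d^\ast$. Then $h_k$ strictly prefers $d^\ast$ to $d$. Also, $d^\ast$ is matched to some $h\neq h_k$ (every agent is matched since $|H|=|D|$), and $d^\ast$ ranks $h_k$ first, so it strictly prefers $h_k$ to $h$. Hence $(h_k,d^\ast)$ is blocking, contradicting stability. Therefore $M(h_k)=1^{h_k}$, and the attacker reads off the top preference of $h_k$ from the published matching of iteration $k$. After the $n$ iterations it knows $1^{h}$ for every $h\in H$.

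There is no real obstacle. The only point needing care is that the conclusion must hold for arbitrary $\Pi^H$, not only for identical rows as in Lemma~\ref{lem:all-same-good-propose}. For that reason I avoid relying on that lemma's counting argument, which does use identical rows to ensure distinct partners across iterations, and use only the local blocking-pair argument, which holds for every stable matching algorithm.
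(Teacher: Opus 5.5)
Your proposal is correct and is essentially the argument the paper relies on. The paper states the observation without a separate proof, but in iteration $k$ every malicious agent ranks $h_k$ first, which is exactly the strategy of Lemma~\ref{lem:first-choice-learnable}, so the same blocking-pair argument forces $h_k$ to be matched with $1^{h_k}$ for arbitrary $\Pi^H$ and any stable matching algorithm.
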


\subsection{Analysis of centralized algorithms}

In this section, we analyze the privacy properties of the centralized stable marriage algorithms. We start by showing that there are special cases of $\Pi^H$, where an algorithm can prevent an adversary from revealing too many preferences.

    \begin{lemma} \label{lem:first-choice-different}
    Assume that the top choice for each agent in $H$ is different. Then, there exists a stable matching algorithm, such that no adversarial strategy can learn anything beyond the first preference. 
    \end{lemma}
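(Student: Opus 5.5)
The natural candidate is the Propose$\&$Reject algorithm in which the honest side $H$ proposes. I would prove that, whenever the top choices of the agents in $H$ are pairwise distinct, its output does not depend on $\Pi^D$ at all. In that case every run reveals the same matching, so no adversarial strategy, adaptive or not and over any number of iterations, can extract more than what a single matching reveals.

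\textbf{Step 1: the output is fixed.} In the first proposal round every $h \in H$ proposes to its top choice $1^h$. Because these top choices are pairwise distinct, every $d \in D$ receives exactly one proposal. Each receiver therefore tentatively accepts, whatever its reported preference list $\pi_d$ says. No one is rejected, so no further proposals are made, and the algorithm terminates with the matching
\[
M = \{(h, 1^h) : h \in H\}.
\]
This is independent of $\Pi^D$. It is stable for any $\Pi^D$, since every honest agent holds its top choice and so belongs to no blocking pair. This also agrees with the standard fact that proposer-optimal Gale--Shapley outputs a stable matching.

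\textbf{Step 2: indistinguishability.} Consider two honest preference matrices $\Pi^H$ and $\Pi'^H$ that have the same (pairwise distinct) first column but may differ in every later position. For every choice of $\Pi^D$ the algorithm outputs the same $M$ on both. The attacker's entire view is the sequence of its own chosen $\Pi^D$'s together with the announced matchings, and this view is therefore identical under $\Pi^H$ and $\Pi'^H$. This holds even when each $\Pi^D$ is chosen adaptively, by induction on the iteration. Hence the attacker cannot distinguish any two matrices that agree on the top preferences. It learns exactly the first column, which Lemma~\ref{lem:first-choice-learnable} shows is always learnable, and nothing beyond it.

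\textbf{Main obstacle.} The main obstacle is making ``learn nothing beyond the first preference'' precise. I would formalize it as the indistinguishability statement of Step 2, namely that the set of $\Pi^H$ consistent with the attacker's view equals the set of all matrices sharing the observed first column. With that formalization, the combinatorial content of the proof is just Step 1.
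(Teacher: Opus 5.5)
Your proof is correct and rests on the same core fact as the paper's proof. The top-choice matching $M=\{(h,1^h) : h\in H\}$ is stable for every $\Pi^D$, and the witness algorithm outputs it regardless of what the attacker submits.

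The difference is the choice of witness. The paper argues existentially: it posits an algorithm that outputs $M$ whenever the top choices of $H$ are distinct, implicitly running some stable-matching algorithm otherwise. That is a one-line argument, but the algorithm is ad hoc. You instead take the honest-proposing Propose$\&$Reject algorithm and check that it stops after one round with $M$. This is exactly the first half of the paper's proof of Theorem~\ref{thm:propose-reject-privacy-preserving}. So your route proves the lemma and that part of the theorem together, using a standard algorithm that is defined and stable on all inputs. Your Step 2 also makes ``nothing beyond the first preference'' precise, via identical views under adaptive strategies, which the paper leaves informal.

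One small correction to your formalization. The set of $\Pi^H$ consistent with the attacker's view always \emph{contains} all matrices sharing the first column, and that is all the lemma needs. It \emph{equals} that set only if the attacker may assume the distinct-top-choice hypothesis, or uses a strategy like that of Lemma~\ref{lem:first-choice-learnable}. For instance, take $n=2$, let $h_2$ rank $d_2$ first, and let $d_2$ rank $h_2$ first. Then a single run cannot distinguish $h_1\colon d_1\succ d_2$ from $h_1\colon d_2\succ d_1$, because in both cases the output is $\{(h_1,d_1),(h_2,d_2)\}$.
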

    \begin{proof}
     If the top choices of agents in $H$ are different, a matching $M$ that maps the agents in $H$ with their top choices is always stable. Independent of the preferences in $D$ and the bottom $n-1$ preferences of agents in $H$, an algorithm can always output the matching $M$. This way, the malicious side cannot learn any more preferences beyond the top choices.
    \end{proof}

    Note that in Lemma~\ref{lem:first-choice-learnable} we showed that the top choices can always be learned independent of the algorithm. We now show that the above matrix $\Pi^H$ can be generalized: 
    
    \begin{corollary}\label{cor:reveal_clusters}
        Partition $D$ into disjoint sets $D_1, ..., D_k$, further partition $H$ into disjoint sets $H_1, ..., H_k$, such that $|D_i| = |H_i|$. Define $\Pi^H$ such that all agents in $H_i$ have $D_i$ as their top preferences (the ordering of the preferences in $D_i$ can vary among the agents in $H_i$). 
        Then, there exists a stable matching algorithm, such that no adversarial strategy can learn anything beyond the clusters, that is learn how agents in $H_i$ order $D\setminus D_i$.
    \end{corollary}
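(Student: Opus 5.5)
The plan mirrors the proof of Lemma~\ref{lem:first-choice-different}. We construct an algorithm that, whatever $\Pi^D$ the attacker submits, returns a matching $M$ that depends only on the restriction of $\Pi^H$ and $\Pi^D$ to the blocks $H_i \times D_i$. Concretely, the algorithm runs the Propose$\&$Reject algorithm where the honest side proposes. We will argue that under the cluster structure no honest agent $h \in H_i$ ever proposes outside $D_i$. Once this is shown, the output depends only on how agents in $H_i$ order $D_i$, and on how agents in $D_i$ order $H_i$ (the latter chosen by the attacker). In particular the output is independent of how agents in $H_i$ order $D\setminus D_i$.

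The key steps, in order, are as follows.

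\textbf{(1) Proposals stay inside the cluster.} Suppose some $h\in H_i$ is rejected by all of $D_i$. Then all $|D_i|$ agents of $D_i$ hold proposals. Since receivers never become free once proposed to, they hold them from agents in $H_i\setminus\{h\}$, because only $H_i$ proposes to $D_i$ before exhausting $D_i$. This needs an induction over the order of events, and it would require $|H_i|-1\ge |D_i|$, a contradiction with $|D_i|=|H_i|$. Hence each $h\in H_i$ is matched within $D_i$, and the run restricted to cluster $i$ is exactly Gale--Shapley on the instance $(H_i,D_i)$.

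\textbf{(2) Stability of the output.} Stability follows from correctness of Propose$\&$Reject in general. Alternatively, one can check it directly: a pair $(h,d)$ with $h\in H_i$, $d\in D_j$, $j\neq i$, cannot block, because $h$ prefers every member of $D_i$, including its partner, to $d$.

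\textbf{(3) Indistinguishability.} Fix any adaptive attacker strategy. Take two honest preference matrices that agree on the internal orderings of each $D_i$ within each $H_i$ but differ in how agents order $D\setminus D_i$. For each submitted $\Pi^D$, step (1) shows the two produce identical matchings. By induction over iterations, the whole transcript the attacker sees is identical. Therefore nothing about the ordering of $D\setminus D_i$ can be learned.

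The main obstacle is step (1). One must make the counting argument rigorous regardless of proposal order, since centralized Propose$\&$Reject allows arbitrary sequencing. I would first prove an invariant: at every step, each proposer in $H_i$ has only proposed within $D_i$. Suppose $h\in H_i$ is the first to exhaust $D_i$. Then every $d\in D_i$ holds a proposal from $H_i\setminus\{h\}$, which is impossible by pigeonhole, so the invariant is preserved.

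An alternative route avoids the process analysis entirely. The algorithm outputs the union of per-cluster stable matchings computed on $(H_i,D_i)$, using only the in-cluster preferences. Step (2) gives global stability of this union, and step (3) then follows immediately.
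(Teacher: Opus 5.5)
Your proposal is correct and follows the paper's own route. The corollary is justified as a generalization of Lemma~\ref{lem:first-choice-different}: a union of within-cluster stable matchings is globally stable because no $h\in H_i$ prefers any agent of $D\setminus D_i$. Theorem~\ref{thm:propose-reject-privacy-preserving} then realizes this with Propose$\&$Reject where the honest side proposes, just as your main route does. Your first-exhaustion pigeonhole invariant and the explicit induction over the attacker's adaptive transcript make rigorous two steps the paper only asserts: that agents in $H_i$ never propose outside $D_i$, and that identical outputs reveal nothing about those orderings.
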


Next, we show that the Propose$\&$Reject algorithm, where the dishonest side proposes, is vulnerable to privacy attacks.

    \begin{theorem}\label{thm:dishonest-propose}
        Assume that the Propose$\&$Reject algorithm, where the dishonest side proposes, is used as the matching algorithm. Then, there exists a privacy attack that learns everything in $n^2$ matching iterations.
    \end{theorem}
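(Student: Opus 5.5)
The attack learns the ranking $\pi_h$ of one honest agent $h\in H$ at a time, position by position, spending exactly one matching iteration per position. Over $n$ honest agents with $n$ positions each, this gives $n^2$ iterations. The basic step is a \emph{tournament} at $h$. A chosen subset $S\subseteq D$ of dishonest agents all propose to $h$ first. We then arrange the remaining preference lists so that nobody outside $S$ ever proposes to $h$. Since $h$ tentatively keeps its favourite proposal so far, its final partner is then exactly its most preferred agent in $S$.

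\textbf{Construction of one iteration.} Fix $h$ and a set $S\subseteq D$ with $|S|=s\ge 1$. Choose a set $H'\subseteq H\setminus\{h\}$ of $n-s$ honest agents and a bijection $\sigma\colon D\setminus S\to H'$. Let $F=H\setminus(H'\cup\{h\})$ be the remaining $s-1$ honest agents, listed as $f_1,\dots,f_{s-1}$.

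Each $d\in D\setminus S$ ranks $\sigma(d)$ first and $h$ last; the middle of its list is arbitrary.

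Each $d\in S$ ranks $h$ first, then $f_1\succ\dots\succ f_{s-1}$, then the agents of $H'$ in any order.

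We then trace Propose$\&$Reject with $D$ proposing. Initially every agent of $H'$ receives exactly one proposal, from its $\sigma$-preimage, and $h$ receives the $s$ proposals from $S$. Agent $h$ keeps its favourite $d^*\in S$ and rejects the other $s-1$. These rejected agents then propose only among $f_1,\dots,f_{s-1}$. There are $s-1$ of them and $s-1$ slots, and each $f_i$ keeps whoever it likes best. Consequently any proposer rejected within $F$ moves on to a later $f_j$, and all of them are absorbed inside $F$ before anyone reaches $H'$. This is a counting argument: a rejected agent only moves on while some $f_j$ further down the common order is still free.

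So no agent of $H'$ ever receives a second proposal. Hence no agent of $D\setminus S$ is ever rejected, and no one proposes to $h$ again. The output therefore matches $h$ with $d^*$, the top element of $S$ in $\pi_h$.

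\textbf{The full attack.} For each $h\in H$ in turn, set $S_1=D$ and run one iteration to learn $1^h$. Then set $S_2=D\setminus\{1^h\}$ to learn $2^h$, and in general $S_k=D\setminus\{1^h,\dots,(k-1)^h\}$ to learn $k^h$. The last position ($s=1$) is in fact implied, so it does not even need a run. Doing this for all $n$ honest agents recovers $\Pi^H$ completely in at most $n\cdot n=n^2$ iterations, which proves Theorem~\ref{thm:dishonest-propose}.

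\textbf{Main obstacle.} The delicate step is the claim that the rejections caused at $h$ never cascade. We need both that the rejected members of $S$ never spill into $H'$ and that no one proposes to $h$ a second time, because either event could change $h$'s partner. I would prove this by an invariant on the run: at every point, the agents of $S$ that have been rejected by $h$ and are not yet held occupy only positions in $F$. Moreover, the number of agents of $S$ that have proposed into $F$ never exceeds $|F|=s-1$. Since the matching found by Propose$\&$Reject is independent of the order of proposals, it suffices to argue this for one convenient order, namely all first proposals first. That reduces the claim to the simple counting argument above.
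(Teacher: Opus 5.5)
Your proof is correct and is essentially the paper's attack: in each iteration the not-yet-identified agents put $h$ first and the already-identified ones $1^h,\ldots,(k-1)^h$ put $h$ last, so $h$ ends with its favourite unidentified agent, for at most $n^2$ iterations in total. The only difference is that you fix the middles of the lists (the bijection $\sigma$ onto $H'$ and the common order on $F$) to get an explicit trace. The paper instead leaves them arbitrary and argues globally: an identified $d$ could only reach $h$ if all $n-1$ agents of $H\setminus\{h\}$ held distinct partners from $D\setminus\{d,(i+1)^h\}$, a set of size $n-2$.
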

    \begin{proof}
         By Lemma~\ref{lem:first-choice-learnable}, it is possible to learn the first preference of each agent in $n$ rounds. We use induction to show that the attacker can learn all preferences in $\Pi^H$. The top preferences that can be learned with Lemma~\ref{lem:first-choice-learnable} form the induction base. The induction step is performed for each honest agent $h$ separately. We assume that the attacker has learned the first $i, i\ge 1$ preferences of $h$. To learn the preference $(i+1)^h$, the attacker can use the following preference matrix: put $h$ in the last place of its first $i$ preferences $\{1^h,\ldots, (i)^h\}$, and place $h$ as the top choice for all other malicious agents.
         
         
         To prove that $(h, (i+1)^h)$ will be matched, we need to show that no agent in $\{1^h,\ldots, (i)^h\}$ will propose to $h$ in the matching algorithm. Observe first that $(i+1)^h$ will form a matching pair with $h$ in the first round, since it is the highest ranked agent to propose to $h$ in this round. After this round, $h$ can only reject $(i+1)^h$ if it receives a proposal from an agent in $\{1^h,\ldots, (i)^h\}$. Moreover, none of the agents in $D\setminus\{1^h,\ldots, (i)^h\}$ will propose to $h$ anymore. 
         
         We will now show that, in the following rounds, all other agents will form a stable matching among themselves and therefore no agent in $\{1^h,\ldots, (i)^h\}$ will propose to $h$.
         Assume by means of contraposition that an agent $d\in \{1^h,\ldots, (i)^h\}$ proposes to their last preference $h$. This means that the previous $n-1$ proposals of this agent have been rejected. A proposal can only be rejected by an agent $h_i\neq h$ if it accepts a proposal from another agent. That is, $h_i$ is matched to one of the remaining $n-2$ agents in $D$ (all agents excluding $(i+1)^h$ and $d$). Observe that there are only $n-2$ agents in $D$ that can make agents $H\setminus h$ reject proposals (or previously established matchings) from $d$. Thus, there must exist at least one agent in $H\setminus h$ that accepts a proposal from $d$ without rejecting it later. This is a contradiction to our previous assumption. This shows that $h$ will not receive any proposals after the first matching round. Therefore $(h, (i+1)^h)$ must form a pair in the output stable matching and thus be revealed to the attacker.
    \end{proof}

We now turn our focus to centralized Propose$\&$Reject algorithm where the honest side proposes and show that this algorithm can in fact preserve privacy in the previously discussed cases.
    \begin{theorem}\label{thm:propose-reject-privacy-preserving}
        Assume that the Propose$\&$Reject algorithm where the honest side proposes is used as the matching algorithm. This algorithm is privacy preserving with respect to Lemma~\ref{lem:first-choice-different} and Corollary~\ref{cor:reveal_clusters}. 
    \end{theorem}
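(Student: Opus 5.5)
We will show that, on the preference matrices of Lemma~\ref{lem:first-choice-different} and Corollary~\ref{cor:reveal_clusters}, Propose$\&$Reject with $H$ proposing outputs the same matching for \emph{every} choice of $\Pi^D$. Concretely, each $h\in H_i$ ends up matched inside $D_i$, and which partner it gets depends only on how $H_i$ ranks $D_i$ and on $\Pi^D$. The output then depends only on the restriction of $\Pi^H$ to the clusters, so no adversarial strategy can learn how any $h\in H_i$ orders $D\setminus D_i$. Lemma~\ref{lem:first-choice-different} is the special case where all clusters are singletons, $|H_i|=|D_i|=1$.

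\textbf{Singleton case.} Each $h$ proposes in the first step to its top choice. Since these top choices are pairwise distinct, every $d\in D$ receives exactly one proposal and tentatively accepts it. Nobody is rejected, so no further proposals are made and the algorithm stops. The output is the matching of each $h$ with $1^h$, regardless of $\Pi^D$ and of the lower preferences of the honest agents. Every run therefore reveals exactly the information already available from Lemma~\ref{lem:first-choice-learnable}.

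\textbf{Cluster case.} Here the key invariant is that no $h\in H_i$ ever proposes outside $D_i$, and the main work is to establish it. The argument is by counting. Suppose, for contradiction, that some $h\in H_i$ is the first agent of $H_i$ to be rejected by all of $D_i$. At that moment every $d\in D_i$ has rejected $h$, so each $d\in D_i$ currently holds a proposal. That proposal comes from an agent of $H_i$, because up to this point every agent of $H_i$ has proposed only within $D_i$, and agents of $H_j$ with $j\neq i$ have not proposed into $D_i$. We make the last point rigorous by taking $h$ to be the first agent across all clusters to leave its cluster. A receiver, once holding a proposal, stays matched, so all $|D_i|$ agents of $D_i$ are held by distinct agents of $H_i\setminus\{h\}$. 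This needs $|D_i|$ distinct agents out of a set of size $|H_i|-1=|D_i|-1$, a contradiction. Hence the run is exactly the union of independent runs of Propose$\&$Reject on the subinstances $(H_i,D_i)$. The proposal order inside each $H_i$ uses only the restriction of $\Pi^H$ to $D_i$. Running the proposals in a different order does not affect the result, since Propose$\&$Reject always produces the proposer-optimal stable matching.

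\textbf{Conclusion.} For any $\Pi^D$ chosen by the attacker, any two honest matrices that agree on the within-cluster rankings yield identical outputs. Repeated interaction therefore reveals nothing about the orderings over $D\setminus D_i$. This establishes that the algorithm realises the guarantees of Lemma~\ref{lem:first-choice-different} and Corollary~\ref{cor:reveal_clusters}. We also note that within-cluster orderings can still leak, which is consistent with the statement.
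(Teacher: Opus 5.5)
Your proof is correct and follows the paper's route. In the singleton case every receiver gets exactly one proposal in the first round and the algorithm stops; in the cluster case no proposal ever leaves $D_i$, so the run splits into independent runs on the subinstances $(H_i,D_i)$, and your counting argument (taking $h$ to be the first agent anywhere to be rejected by its whole cluster) justifies this invariant, which the paper only asserts.

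One wording fix: your opening sentence says the output is the same for \emph{every} $\Pi^D$, which holds only in the singleton case. In the cluster case the output does depend on $\Pi^D$; what you actually prove, and need, is that for each fixed $\Pi^D$ it is independent of the out-of-cluster rankings, exactly as your conclusion states.
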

    \begin{proof}
        Consider the matrix in Lemma~\ref{lem:first-choice-different} first. In the first round of the Propose$\&$Reject algorithm, all honest agents propose to their first preference in the list. By assumption, the first preference is different for every honest agent. Thus, every malicious node will receive exactly one proposal which it will accept in this round. Since every honest node has a matching partner, the algorithm terminates with a matching after one round. 

        Next, consider $\Pi^H$ from Corollary~\ref{cor:reveal_clusters}. Observe that the agents in $H_i$ will propose to agents in $D_i$ first, and the Propose$\&$ Reject algorithm will establish a stable matching between these subsets. Thereafter, the agents in $H_i$ will not propose anymore. This holds for every $i\in[k]$. The matching is stable, because none of the agents in $H_i$ will prefer any of the agents in $D\setminus D_i$.
    \end{proof}

\subsection{Analysis of decentralized algorithms}

In this section, we show that the decentralized versions of both Propose$\&$Reject algorithms are vulnerable to privacy attacks. For the analysis in the decentralized setting, we consider communication rounds within one iteration of the Propose$\&$Reject algorithm used for sending proposals and rejecting them.

\begin{lemma}\label{lem:decentralized-learn-everything-simple}
    There is a malicious privacy attack that learns all preferences in the  Propose$\&$Reject algorithm, where the honest side initiates proposals. This attack takes 1 algorithm iteration with $2n$ communication rounds.
\end{lemma}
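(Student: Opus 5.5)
In the decentralized Propose$\&$Reject algorithm with the honest side proposing, the malicious agents are the receivers, so they directly see every proposal addressed to them. An honest agent $h$ proposes to its list in order, $1^h, 2^h, \dots$, and moves to the next entry only after being rejected. The attack is therefore simple: every malicious agent rejects every proposal it receives and never tentatively accepts anyone. It deviates from the protocol, which the adversarial model permits, but it keeps participating so that proposals continue.

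With this strategy, each honest agent $h$ walks through its entire preference list. In one round $h$ proposes to $k^h$; in the next round $k^h$ rejects it; in the following round $h$ proposes to $(k+1)^h$. Because the dishonest agents pool their information, they record for each $h$ the ordered sequence of recipients $1^h, 2^h, \dots, n^h$, indexed by the round in which each proposal arrived. That sequence is exactly the row $\pi_h$ of $\Pi^H$. Since the protocol is synchronous, all honest agents advance in lockstep, so a single run of the algorithm reveals the whole matrix.

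For the round count, each of the $n$ proposals of an honest agent takes one proposal round plus one rejection round, giving $2n$ communication rounds. This holds for all honest agents in parallel, within one iteration of the algorithm. The dishonest agents do not need to reject the $n$-th proposal, since by then the full list is known. They may accept it instead so that the run terminates cleanly; in fact any deviation after the last proposal is irrelevant.

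The main obstacle is a modelling question rather than a calculation. I must argue that an honest agent, following the protocol correctly, keeps proposing after repeated rejections and will propose to its $n$-th choice even though the receivers are behaving arbitrarily. I also must account for what honest agents do after exhausting their list, so that the run does not break before the last proposals are observed. I would handle this by appealing to the adversarial model: honest agents always follow the algorithm, and in Propose$\&$Reject an unmatched proposer always proceeds to the next entry on its list. Hence all $n$ proposals of every honest agent are issued within $2n$ rounds, whatever happens afterwards.
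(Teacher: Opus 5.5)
Your proposal is correct and takes the same approach as the paper. The malicious receivers reject every incoming proposal, so each honest agent proposes down its whole list in order, and the pooled record of who proposed to whom, and in which round, recovers $\Pi^H$ in a single run of $n$ proposal rounds plus $n$ rejection rounds. Your added remarks are harmless refinements of the paper's brief argument: that the last proposal may be accepted, and that honest agents are assumed to follow the protocol. The paper instead notes that the attack can leave every honest agent unmatched.
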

\begin{proof}
    If the honest agents are the proposers, malicious agents can strategically reject all incoming proposals. By observing the sequence of proposals they receive over $n$ proposal rounds, the attacker can fully reconstruct the preference orderings of all honest agents. Moreover, the attacker can prevent honest agents from obtaining a matching.
\end{proof}

\begin{lemma}
    There is a malicious attack that learns all preferences in the  Propose$\&$Reject algorithm, where the malicious side initiates proposals. This attack takes $n(n-1)$ algorithm iterations with at least two communication rounds in each iteration.
\end{lemma}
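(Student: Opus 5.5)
The plan is to exploit the decentralized setting, where malicious proposers need not follow the protocol and can watch every individual accept/reject reply. That lets us learn each honest agent's list through $n(n-1)$ pairwise comparisons. For every honest agent $h$ and every unordered pair $\{d,d'\}\subseteq D$, the malicious side tests whether $d\succ_h d'$. A cheaper way to organize this is a comparison sort, but the counting below already gives $n(n-1)$ iterations, which is what we need.

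\textbf{Single test.} In one iteration, the attacker lets $d$ propose to $h$ in the first proposal round and $d'$ propose to $h$ in the second. All other malicious agents either stay silent or propose elsewhere so as not to interfere. The second proposal is made only if nothing has yet been observed, so at least two communication rounds are used.

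Since $h$ is honest and follows Propose$\&$Reject, it tentatively accepts $d$. When $d'$ arrives, $h$ rejects whichever of $d,d'$ it likes less. Observing which one is rejected reveals $d\succ_h d'$ or $d'\succ_h d$.

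Because honest agents never propose in this variant, they act only as receivers. Their replies therefore depend solely on their true preferences and the proposals they receive, so the test is faithful.

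\textbf{Scheduling the tests.} Within a single iteration, tests on distinct honest agents do not interfere, since each honest agent responds only to its own incoming proposals. So the attacker runs one comparison per honest agent in parallel. One concern is that a single malicious agent may be needed to propose to several honest agents in the same round, which plain proposing may not allow.

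To avoid this conflict, use a Latin-square-style schedule. Order the ordered pairs $(d_j,d_{j'})$ with $j\neq j'$; there are $n(n-1)$ of them. In iteration $t$, honest agent $h_i$ receives the pair obtained by shifting by $i$ cyclically. Then each malicious agent proposes to at most one honest agent per round, and over $n(n-1)$ iterations every honest agent sees every ordered pair. Alternatively, if parallel proposals are allowed, one can simply note that each honest agent needs all $\binom n2$ pairs, which is at most $n(n-1)$ iterations.

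\textbf{Reconstruction and main obstacle.} Once all pairwise comparisons for $h$ are known, the strict order $\pi_h$ is determined. The main obstacle is showing that per-round proposal constraints still let the comparisons for all $n$ honest agents fit into $n(n-1)$ iterations. I would settle this with the cyclic shift argument: the map $(i,t)\mapsto$ pair is a bijection for each fixed $i$, and for each fixed $t$ the first elements are distinct across $i$, as are the second elements.
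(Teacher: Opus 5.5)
Your proof is correct, but it takes a different route from the paper's. The paper reuses its centralized machinery. In the first $n$ iterations it learns all top choices via Lemma~\ref{lem:first-choice-learnable}. Each of the remaining $n(n-2)$ iterations learns one position $i^h$ with $2\le i\le n-1$ (the last position is implied): $1^h,\dots,(i-1)^h$ rank $h$ last, all other malicious agents rank $h$ first, and the counting argument of Theorem~\ref{thm:dishonest-propose} guarantees that $h$ ends up with $i^h$. In that attack the malicious agents only misreport their lists and otherwise run Propose$\&$Reject faithfully. So every iteration is a legitimate execution ending in a stable matching for the reported lists, the attack is undetectable, and it is essentially the centralized attack of Theorem~\ref{thm:dishonest-propose}, where only the output matching is seen. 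The price is an adaptive schedule and a nontrivial stability argument.

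You instead use two features specific to the decentralized model. Individual rejections are observable, and malicious agents may deviate from the protocol: they can propose out of turn, propose while tentatively held elsewhere, and halt after two rounds. This gives a non-adaptive, self-contained attack that needs only the honest receivers' accept/reject rule. It even improves the bound: the same cyclic schedule works for unordered pairs, giving $\binom{n}{2}$ iterations. The cost is that your iterations generally do not end in a matching; some $d$ can be held by two honest agents and another by none, so the deviation is visible to honest agents. The adversarial model explicitly allows such deviation, so this does not invalidate your proof.

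One point to state precisely is the cyclic shift. It must act on agent indices: if the $t$-th pair is $(d_a,d_b)$, then $h_i$ receives $(d_{a+i},d_{b+i})$ with indices taken mod $n$. Shifting the position in the list of pairs instead would not make the round-one (or round-two) proposers distinct.
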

\begin{proof}
    Using the round-robin proposal scheme, by Lemma~\ref{lem:first-choice-learnable}, each honest agent's top choice is revealed within the first $n$ iterations. Over the next $n(n-2)$ iterations, the attacker can uncover all remaining preferences, revealing one per round. To reveal agent $h$'s $i^{\text{th}}$ preference $i^h$, the malicious agents manipulate their rankings as follows: The agents in positions $1^h$ through $(i-1)^h$ of $h$ place $h$ at the bottom of their preference lists. All other malicious agents rank $h$ as their top choice and propose. As shown in the proof of Theorem~\ref{thm:dishonest-propose}, $h$ will accept the proposal from $i^h$. This allows the attacker to learn $h$'s $i^{\text{th}}$ choice. This process is repeated for each agent and each preference position, enabling complete recovery of all preferences within $n(n-1)$ iterations. 
\end{proof}

    

\section{Simulations}
To complement our theoretical results we additionally performed simulations on both real-world as well as synthetically generated preferences. Our main focus here will be the central Propose$\&$Reject algorithm in which the honest side initiates the proposals. From Theorem~\ref{thm:propose-reject-privacy-preserving} we know that if all first preferences of the honest agents are different, then we can not learn anything beyond the first preferences. However, by Lemma~\ref{lem:all-same-good-propose} we know if all preferences are identical, then a Round-Robin attack learns all preferences in $n$ iterations. We aim to investigate experimentally the structural gap between the two results, and answer how much we can learn in the cases in-between, specifically also in the case of structured real-world preference data.

We note that we have also simulated the case in which the dishonest agents initiate proposals to see if our algorithms learn faster in practice than what is guaranteed by Theorem~\ref{thm:dishonest-propose}, which unfortunately was not the case. In general, we learn only one preference per iteration and not significantly more information, unless in specific cases such as when all honest agents' preferences are identical.

For completeness, we describe what information we learn from a matching, as well as the strategies we utilize in more detail.

Finally, we briefly evaluate different strategies for the decentralized setting. As Lemma~\ref{lem:decentralized-learn-everything-simple} shows, by simply rejecting all incoming proposals all preferences can be learned. However,this will also result in an empty matching. We evaluate different strategies that try to learn the preferences while also computing a matching.

\subsection{Learning rules}
We initialize for every honest agent $h$ and a position $x$ in its preference with a set of possible dishonest agents that could occupy that position. Initially all these sets are $D$.
During the execution of our algorithms we remove agents from these sets if we learn any information. The goal is to reduce the size of each possibility set down to one agent, if possible. We say a preference of some honest agent at some position has been \emph{learned}, if the corresponding possibility set contains only a single element.

It may happen that at the end of our algorithms we still have more than one possibility per honest agent and position. To measure the gained information about the preferences we introduce an \emph{uncertainty} metric which we define as follows:
$$\frac{\sum_{h\in H, x \in [n]} (\text{possibilities for preference } x \text{ of agent } h) - n^2}{n^3 - n^2}$$

Consequently, the uncertainty ranges from 0, if all preferences were learned, to 1, in the case that no information on the preferences was obtained. 
We use the below learning rules to shrink the size of the possibility sets:


\subsubsection{General learning rule}
Let $\Pi^H$ and $\Pi^D$ be preferences of the honest and dishonest agents, respectively, and $M$ a stable matching under these preferences.

Let $h$ be any honest agent and let $d \in D$ be the dishonest agent it was matched to in $M$.
Further, let $P$ be the set of dishonest agents (excluding $d$) that would prefer $h$ over their current partner in $M$. From the stability property we infer that $d \succ_h p$ for all $p \in P$. This implies that $d$ cannot be in the bottom $|P|$ preferences of $h$. We thus remove $d$ from $h$'s bottom most $|P|$ possibility sets.

\subsubsection{Information propagation} Once we know there is only one possible position of agent $d \in D$ in the preferences of agent $h \in H$ then we fix that preference, by assigning the possibility set $\{d\}$ at that position. Note that more involved propagation techniques are possible, e.g. based on similar combinatorial arguments. However, for simplicity we only use this one propagation rule for its simplicity and the ability to efficiently implement it.

\subsubsection{Attack strategies}
For our experiments we utilized the following strategies for the dishonest agents:
\paragraph{Random} Here we simply try $n^2$ random preference matrices for the attackers as a baseline to compare our strategies against. We report on average values obtained from 10 different runs.
\paragraph{Round-Robin} This is the strategy utilized in Lemma~\ref{lem:all-same-good-propose}. That is, all dishonest agents initially set their preferences to $h_1 \succ \dots \succ h_n$. This preference order is then cycled in round-robin fashion in subsequent iterations. Recall that this strategy is guaranteed to always learn at least the first preferences of the honest agents.
\paragraph{Brute-Force} Here the attackers simply try all possible preferences for the attack. This takes $(n!)^n$ many iterations, which we have found to be only computationally viable for $n$ at most 4.
\paragraph{Targeted-Propose} This strategy is based on the strategy described in 
Theorem~\ref{thm:dishonest-propose}, which is guaranteed to work if the dishonest agents propose. We now consider the setting where the honest agents propose. We adapt the strategy in the following ways: we try to first learn the first preferences of every agent, then the second preference of every agent and so on. In each iteration when we are trying to learn the $i^{\text{th}}$ preference of honest agent $h$, as before, we try to prevent the dishonest agents $1^h, \dots, (i-1)^h$ from matching with $h$ by assigning $h$ as their last preference, and the remaining ones assign $h$ as their first preference. If we have not determined the preferences $1^h, \dots, (i-1)^h$ we then stop trying to learn more preferences for $h$. Unfortunately, as the honest agents propose, we can not ensure that the agents $1^h, \dots, (i-1)^h$ will not be matched to $h$. We try to mitigate this by assigning their top preferences to honest agents that also rank them highly. Specifically, we iterate over preference positions $x$ starting with 1 and ending at $n$ and for every honest agent $h' \neq h$, if we have determined the preference $x^{h'}$ we assign $h'$ at the highest available free position in $x^h$'s preferences. Recall, that agents that have each other as their first preference have to be matched in any stable matching. For completeness we give the pseudocode in Algorithm~\ref{alg:targeted-propose}.

\begin{algorithm}[t]
    Learn first preferences using Round-Robin\;
    \For{$h \in H$}{
        \For{$i \gets 1,\dots,n$}{
            \If{$1^h,\dots,(i-1)^h$ are not determined}{
                \textbf{break}\;
            }
            $F \gets \{1^h, \dots, (i-1)^h\}$\;
            \tcp{Try to learn the $i$'th preference of $h$}
            $\Pi^D[d, k] \gets \bot \quad\quad\quad\quad \forall d \in D, k \in [n]$\;
            \For{$f \in F$}{
                $\Pi^D[f] \gets [\,]$ \tcp*{Empty list}
            }
            \For{$i' \gets 1,\dots,n$}{
                \For{$h' \in H$ \textbf{with} $h' \neq h$}{
                    $f \gets (i')^{h'}$\;
                    \If{$f$ is determined and in $F$}{
                        append $h'$ to $\Pi^D[f]$
                    }
                }
            }
            \For{$f \in F$}{
                \While{$|\Pi^D[f]| < n$}{
                    append $\bot$ to $\Pi^D[f]$
                }
            }
            \For{$d \in D \setminus F$}{
                $\Pi^D[d,1] \gets h$
            }
            Replace $\bot$'s in $\Pi^D$ randomly to obtain a valid preference ordering for every agent\;
            Submit preferences and apply learning rules to the resulting matching\;
        }
    }
    \caption{Targeted-Propose}
    \label{alg:targeted-propose}
\end{algorithm}

\subsection{Brute-force attacks for $n=4$}
Firstly, we investigate how the different strategies perform by considering all possible $4\times4$ preference matrices of the honest agents. To reduce the number of possible instances we only consider non-isomorphic instances. For example, we can assume any permutation of the honest and dishonest agents without changing the problem. Therefore we only consider $4\times4$ preferences matrices for the honest agents where the entries in the first row are sorted, and the rows of the matrix are sorted lexicographically. This reduces the number of instances from 331,776 to 2,600. Note that within the Brute-Force strategy the same trick can not be performed and we still have to try all 331,776 preference matrices for the dishonest agents. Unfortunately, these were the largest instances we still could apply the Brute-Force strategy to in reasonable time.

We summarize these results in Figure~\ref{fig:bruteforce}. It can be observed that the Random strategy with $n^2$ iterations appears to perform similar to the Round-Robin strategy that only needs $n$ rounds. Even Brute-Force is not able to learn many preferences, however Targeted-Propose is not much worse despite using only at most $n^2$ iterations. Lastly, the Round-Robin approach with $n$ iterations learns significantly less than Targeted-Propose.
\begin{figure}[t!]
    \centering
    \includegraphics[width=0.4\textwidth]{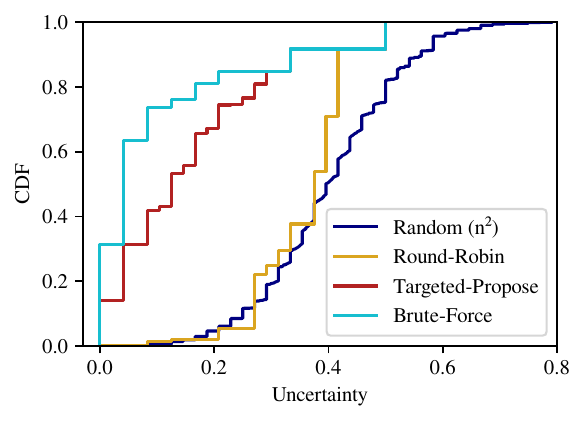}
    \caption{Results for brute-forcing all possible preferences for the honest agents using all possible preferences for the dishonest agents.}\label{fig:bruteforce}
\end{figure}

\subsection{Synthetic data - entropy dissimilarity}
We define a new measure of similarities among the preferences of the honest agents, namely the \emph{entropy dissimilarity} as follows:
$$ \frac{\sum_i \text{entropy in column } i}{n\log{n}} $$ 
This way if all preferences are identical then dissimilarity is 0, and if they are all different, then dissimilarity is 1. We sample matrices with a desired dissimilarity, by first sampling a random matrix and then doing random ranking swaps that change dissimilarity in the right direction. This way we are able to interpolate smoothly between matrices where all preferences are different and ones where they are all identical. This allows us to investigate the structural gap between Theorem~\ref{thm:propose-reject-privacy-preserving} and Lemma~\ref{lem:all-same-good-propose}. We measured the uncertainty after an attack with our learning strategies in Figure~\ref{fig:artificial}. The Random strategy with $n^2$ iterations appears to have high uncertainty for all dissimilarity values, especially for dissimilarity 1 where it has the highest uncertainty among the different strategies, meaning it was not always able to learn the first preferences. The Round-Robin strategy efficiently learns preferences only when dissimilarity is low, whereas the uncertainty of preferences after an attack with Targeted-Propose appears to correlate well with the dissimilarity. However, note that the dissimilarity is not a perfect indicator; the first preferences could be all different, and the remaining preferences mostly identical, leading to low dissimilarity, but high uncertainty.
\begin{figure}[t!]
    \centering
    \includegraphics[width=0.4\textwidth]{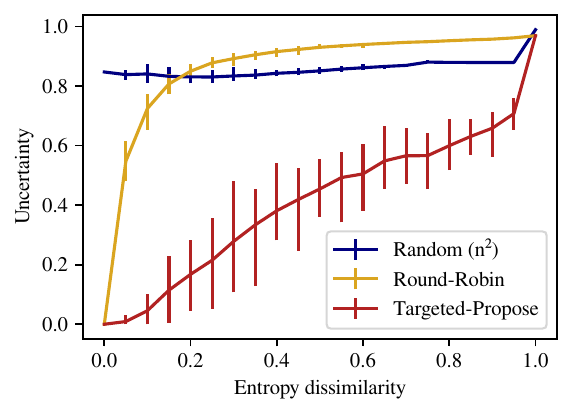}
    \caption{Uncertainty for $64\times64$ preference matrices with varying dissimilarity. The error bars show the minimum and maximum among 50 samples.}\label{fig:artificial}
\end{figure}

\subsection{Synthetic data - Mapel}
We utilized the Mapel dataset from \cite{boehmer2023mapel}. This dataset contains 503 stable matching instances grouped into 15 categories, based on the statistical cultures they originate from. All instances have the same number of agents, that is $n=100$. We take both preference matrices from each instance to obtain 1006 privacy attack instances.
See Figure~\ref{fig:mapel_data} for a visualization of the learned preferences using our Targeted-Propose approach. The following can be deduced from the data: for most instances there exists an agent whose top-75\% or more preferences we can learn fully, and in at least half of the instances we learn at least the top-25\% preferences of half of the agents. The results are therefore heavily skewed: a fraction of agents have preferences that are easier to learn compared to the majority, and a further fraction has preferences that are difficult to learn much further beyond the first preference.
\begin{figure}[t]
    \centering
    \includegraphics[width=0.4\textwidth]{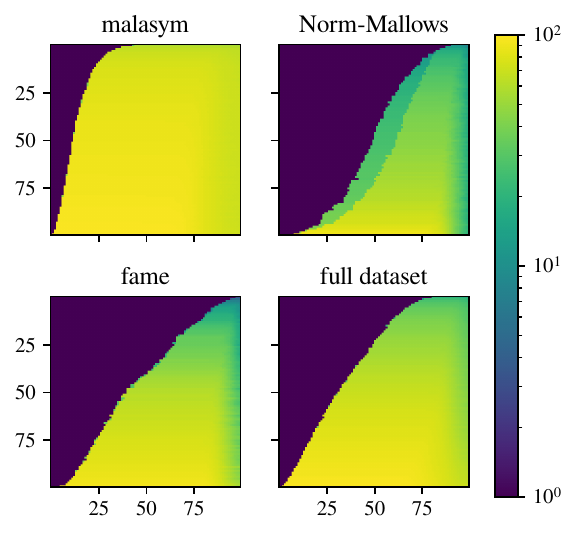}
    \caption{Learned preference information for mapel datasets. The y-axes denote the agents, and the x-axis denotes the preferences. The colors indicate the size of the possible preference sets at the positions in the preference matrices after applying the Targeted-Propose strategy. Three subgroups of data from the dataset have been selected for which we computed the element-wise median of the set sizes in the matrices, and additionally the median over all 1006 instances in the dataset was taken (bottom right). Before the aggregation with the median, the agents are sorted according to how much was learned for them.}\label{fig:mapel_data}
\end{figure}

\subsection{Real-world data - student/project allocation}
The availability of real-world matching data is unfortunately very limited. The PrefLib preference library~\cite{mattei2013preflib} contains a single real-world dataset for a matching problem with preferences, which contains data from a student/project allocation problem at a university~\cite{kwanshie2015projectallocation}. For 8 different academic years, around 31--51 students each year submitted a linear ordering over their top-5 rated projects, out of 56--155 available projects. In this case, we assume the attackers are the project supervisors, wishing to learn the project preferences of the students. To fit this to our setting we set the remaining project preferences of the students randomly. Furthermore, as the number of students is smaller than the number of available projects, we considered different strategies for ensuring the number of participants on each side is the same: duplicating one student many times, or duplicating all students roughly the same number of times, or, finally, adding new dummy students with random preferences. However, there was no large difference in the results in any of these approaches, thus we report on the case where we duplicated one student many times. Using our Targeted-Propose strategy we can determine the top 5 preferences for most students. For a visualization of the learned preferences see Figure~\ref{fig:preflib_data}. Therein we depict the four project years where we learned the fewest preferences. For the last three project years we learn all top-5 preferences, and for another 3 years we only do not learn 1--4 preferences. As we are only interested in the top-5 preferences it is also sufficient to use $5n$ attack iterations. Our Round-Robin approach was only able to learn the top-1 preferences and in a few cases also the second preference, but not much more.

\begin{figure}[t]
    \centering
    \includegraphics[width=0.4\textwidth]{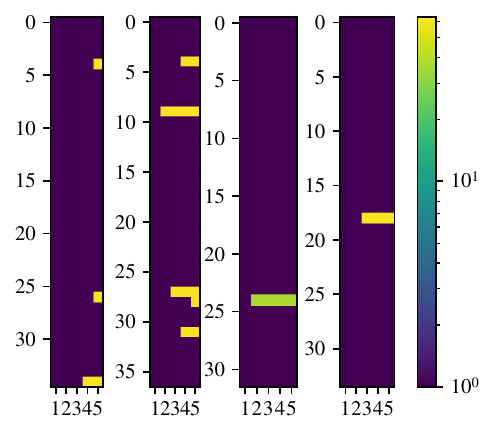}
    \caption{Learned student preferences for the academic years 2007 to 2010 (left to right) using Targeted-Propose. The students are on the y-axis and the preferences are on the x-axis, starting with top preferences on the left side. The colors indicate the size of the preference possibility sets.
    }\label{fig:preflib_data}
\end{figure}



\subsection{Decentralized Setting}
\begin{figure*}[tb]
    \centering
    \includegraphics[width=0.8\textwidth]{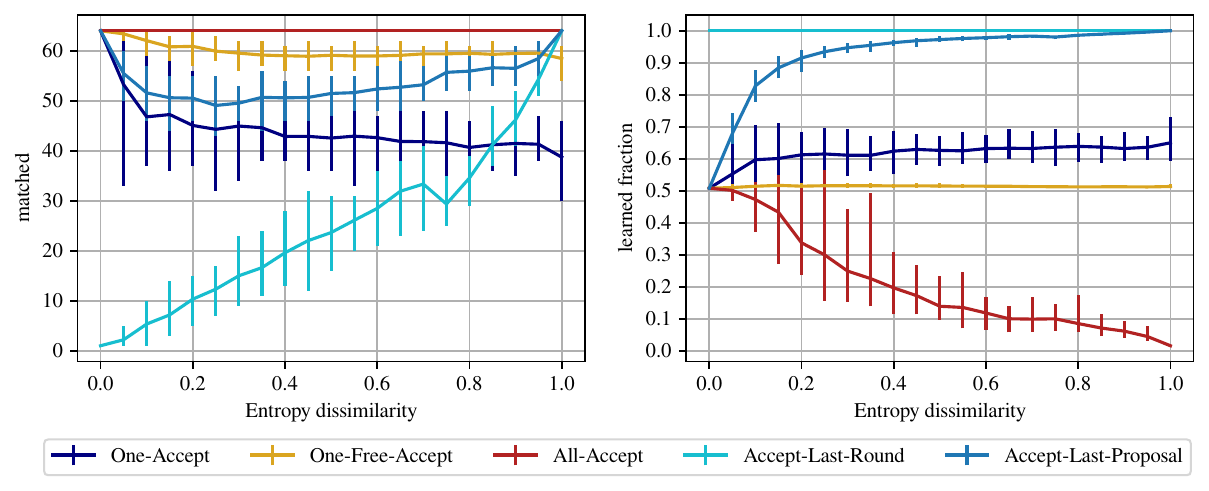}
    \caption{Simulation results for $64\times64$ preference matrices with varying dissimilarity in the decentralized setting. The error bars show the minimum and maximum among 50 samples. Matched refers to the number of matched agents after applying a strategy, whereas the learned fraction is the fraction of revealed honest agents' preferences.}\label{fig:decentralized-simulations}
\end{figure*}

We simulate a decentralized setting in which the honest agents send proposals to the dishonest agents. We assume the agents send proposals as in the Propose$\&$Reject algorithm, that is, in order of their preferences. The dishonest agents therefore learn from each incoming proposal one preference from an honest agents ranking. Furthermore, we assume the proposals are sent in synchronous rounds, where in one round all honest agents that are unmatched each send one proposal to an dishonest agent, then the dishonest agents decide which proposals to accept. The goal for the dishonest agents is therefore to behave strategically in a way that maximizes the number of proposals sent by the honest agents. As shown in Lemma~\ref{lem:decentralized-learn-everything-simple}, if the dishonest agents simply reject all incoming proposals, then they learn all preferences. However, this also results in an empty matching, which means that the honest agents can realize, that the dishonest agents did not follow the Propose\&Reject algorithm.

We empirically evaluate different strategies for the dishonest agents to accept or reject the incoming proposals in an attempt to learn as much information as possible while also matching as many agents as possible. We consider the following strategies:

\paragraph{One-Accept} in any given round an arbitrarily chosen proposal is taken and the two agents, the one being proposed to and the one proposing, are matched. If the agent that was proposed to was already matched, then it rejects its previous matched partner.
\paragraph{One-Free-Accept:} here in any round we choose arbitrarily one proposal to an unmatched dishonest agent and match it with the proposing honest agent.
\paragraph{All-Accept} all proposals are accepted by the dishonest agents. Note that if an agent gets two or more proposals in one round, then it will accept all of them, but also immediately reject all but one of them.
\paragraph{Accept-Last-Round} the dishonest agents reject all proposals in the first~$n-1$ rounds, and in the final round accept all proposals. This is similar to the strategy used in Lemma~\ref{lem:decentralized-learn-everything-simple}, but tries to find a matching in the last round.
\paragraph{Accept-Last-Proposal} here the dishonest agents keep track of which proposals were already sent by the honest agents. An unmatched dishonest agent accepts a proposal only if all other unmatched honest agents proposed to it, that is, this is the last proposal the agent may still receive, so it accepts it.

The results for these strategies are presented in Figure~\ref{fig:decentralized-simulations}. The only strategy among these that always resulted in a complete matching was All-Accept, unfortunately this was also the strategy that generally learned the least. The One-Accept and One-Free-Accept appear to match at least half of the agents, but learn only around half of the preferences. Accepting in the last round learns all preferences similar to the strategy in Lemma~\ref{lem:decentralized-learn-everything-simple}, but does the number of matched agents varies greatly --- all agents are matched only if the last preferences are all different. Accepting the last proposal appears to generally learn many preferences, however a small fraction of unmatched agents remains.

\section{Conclusion}

In this work, we introduced the notion of privacy attacks for the stable marriage problem. We showed that stable marriage algorithms from the literature are vulnerable to these attacks. Our practical evaluation demonstrated that also on real-world datasets we can often learn top-20\% of the preferences, even in the theoretically challenging settings. 

Our work shows that there is a need for new privacy-preserving stable marriage algorithms to make sure that the private preferences of the participants are protected. 
It will be interesting in future work to contribute to devising such algorithms, also answering the question: given a preference matrix of the honest participants, how much information would the best-possible matching algorithm have to reveal in this case? 
Another interesting direction for future work is to focus on attacks in which different subsets of agents participate across repeated executions. In many real-world markets, such as job markets, school choice or organ exchange, participants may enter and leave over time. Understanding how dynamic participation affects privacy leakage is an important next step. Finally, while existing privacy-preserving techniques can in principle protect preferences, their computational cost and practical feasibility for large matching markets require further investigation.

\bibliographystyle{IEEEtran}
\bibliography{our-bib-file}

\end{document}